\def\etal.{et\penalty50\ al.}
\theoremstyle{plain}
\newtheorem{theorem}{Theorem}[section]
\newtheorem{lemma}[theorem]{Lemma}
\theoremstyle{definition}
\newtheorem{definition}{Definition}[section]
\theoremstyle{remark}
\newtheorem{question}{Question}[section]
\newtheorem{remark}[question]{Remark}
\theoremstyle{plain}
\newtheorem*{theorem*}{Theorem}
\newcommand{\poly}{\ensuremath{\mathsf{poly}}}
\title{A Simple PTAS for the Dual Bin Packing Problem and Advice Complexity of Its Online Version}
\author{Allan Borodin\thanks{Research is supported by NSERC.} \\ University of Toronto \\ \textsf{bor@cs.toronto.edu}  \and
Denis Pankratov\footnotemark[1] \\ University of Toronto \\ \textsf{denisp@cs.toronto.edu} \and
Amirali Salehi-Abari\footnotemark[1] \\ University of Toronto \\ \textsf{abari@cs.toronto.edu}}
\begin{document}

\maketitle

\begin{abstract}
Recently, Renault (2016) studied the dual bin packing problem in the per-request advice model of online algorithms. He showed that given $O(1/\epsilon)$ advice bits for each input item allows approximating the dual bin packing problem online to within a factor of $1+\epsilon$. Renault asked about the advice complexity of dual bin packing in the tape-advice model of online algorithms. We make progress on this question. Let $s$ be the maximum bit size of an input item weight. We present a conceptually simple online algorithm that with total advice $O\left(\frac{s + \log n}{\epsilon^2}\right)$ approximates the dual bin packing to within a $1+\epsilon$ factor. To this end, we describe and analyze a simple offline PTAS for the dual bin packing problem. Although a PTAS for a more general problem was known prior to our work (Kellerer 1999, Chekuri and Khanna 2006), our PTAS is arguably simpler to state and analyze. As a result, we could easily adapt our PTAS to obtain the advice-complexity result.

We also consider whether the dependence on $s$ is necessary in our algorithm. We show that if $s$ is unrestricted then for small enough $\epsilon > 0$ obtaining a $1+\epsilon$ approximation to the dual bin packing requires $\Omega_\epsilon(n)$ bits of advice. To establish this lower bound we analyze an online reduction that preserves the advice complexity and approximation ratio from the binary separation problem due to Boyar et al. (2016). We define two natural advice complexity classes that capture the distinction similar to the  Turing machine world distinction between pseudo polynomial time algorithms and polynomial time algorithms. Our results on the dual bin packing problem imply the separation of the two classes in the advice complexity world.
\end{abstract}

\section{Introduction}
\label{sec:intro}
Given a sequence of items of weights $w_1,\ldots, w_n$ and $m$ bins of unit capacity, the dual bin packing problem asks for the maximum number of items that can be packed into the bins without exceeding the capacity of any bin.\footnote{This terminology is somewhat unfortunate, because the dual bin packing problem is not the dual to the natural integer programming formulation of the bin packing problem. For some early results on the latter see~\cite{assmann1984,csirik1988}.} The search version of this problem is to find a good packing. In the online version of this problem, the items are presented one at a time in some adversarial order and the algorithm needs to make an irrevocable decision into which (if any) bin to pack the current item. The dual bin packing problem has a substantial history in both the offline and online settings starting with Coffman et al. \cite{CoffmanLT78}. The performance of the online algorithm is measured by its competitive ratio; 
 that is, the worst-case ratio between the value of an offline optimal solution and the value of the solution obtained by the algorithm. It is known that the online dual packing problem does not admit a constant competitive ratio even for randomized algorithms \cite{BoyarLN01, CyganJS16}. The assumption that the online algorithm does not see the future at all is quite restrictive and in many cases impractical. It is often the case that some information about the input sequence is known in advance, e.g., its length, the largest weight of an item, etc. An information-theoretic way of capturing this side knowledge is given by the \emph{tape-advice model} \cite{bockenhauer2009advice}. In this model, an all powerful oracle that sees the entire input sequence creates a short advice string. The algorithm uses the advice string in processing the online nodes. The main object of interest here is the tradeoff between the size of advice and the competitive ratio of an online algorithm. Often, a short advice string results in a dramatic improvement of the best competitive ratio that is achievable by an online algorithm. Of course, a short advice string can be computationally difficult to obtain since the oracle is allowed unlimited power.

A related advice model is the \emph{per-request advice model} \cite{emek2011online}. In this model, prior to seeing the $i$th input item, the algorithm receives the $i$th advice string. Unlike the tape-advice model, the overall length of advice is always lower bounded by $n$ in this model. Both of these advice models have recently received considerable attention in the research community (see Boyer et al \cite{boyar2016online} for an extensive  survey on this topic). Recently, Renault \cite{renault2016online} studied the dual bin packing problem in the per-request advice model. He designed an algorithm that with $1$ bit of advice per request achieves a $3/2$ competitive ratio. He also showed that with $O(1/\epsilon)$ bits of advice per request it is possible to achieve a $1+\epsilon$ competitive ratio.\footnote{
Renault states the approximation as $1/(1-\epsilon)$ whereas we will use $(1 + \epsilon)$ which is justified since 
$1/(1-\epsilon) \leq 1+\epsilon$ for all $\epsilon \leq 2/3$.  Also without loss of generality, we will sometimes say that the approximation is $1+\Theta(\epsilon)$ since our advice bounds are asymptotic and we can replace $\epsilon$ by $\epsilon/c$ for some suitable $c$.} 
In \cite{renault2016online} Renault explicitly asked, as an open problem, to analyze the advice complexity of the dual bin packing problem in the tape advice model. In this paper, we make progress on the advice complexity needed for achieving a ($1+\epsilon$) competitive ratio for the online dual bin packing problem. Specifically, let $s$ be the maximum bit size of a weight of an input item. In particular, the overall input size is $O(ns)$ bits. We present an online algorithm that with $O(\frac{s+\log n}{\epsilon^2})$ bits of advice achieves a ($1+\epsilon$) competitive ratio for the dual bin packing problem. Note that it is trivial to achieve optimality with $n \log_2 m$ advice bits by specifying for each input item into which bin it should be placed. When stated in the tape advice model, Renault's bound for a ($1+\epsilon)$ competitive ratio is $\Theta(n/\epsilon)$. Our advice bound for achieving a $(1+\epsilon)$ approximation is exponentially smaller for the regime of constant $\epsilon$ and $s = O(\log n)$. When the $n$ item weights have $s=n$ bits of precision, we show that the dependence on $s$ is necessary by exhibiting an $\Omega_{\epsilon}(n)$ lower bound on the advice necessary to achieve a ($1+\epsilon$) approximation.   

Our main result heavily relies on a simple polynomial time approximation scheme (PTAS) for the dual bin packing problem, which constitutes the technical core of this paper. Dual bin packing is a special case of the multiple knapsack problem (MKP). In the MKP, each of the $n$ items is described by its weight (number in $(0,1]$) and its value (an integer). There are $m$ knapsacks each with their own capacity. The goal is to pack a subset of items such that all items fit into the knapsacks without violating weight constraints and the total value of packed items is as large as possible. In the uniform MKP, capacities of the bins are equal, and, are taken to be $1$ without loss of generality. Thus, the dual bin packing problem can be seen as the uniform MKP with all values being $1$. It is known \cite{ChekuriK05} that the dual bin packing, and consequently the MKP, is strongly NP-hard even for $m = 2$, which effectively rules out an FPTAS for these problems. This is in contrast to the standard knapsack problem and the makespan problem for a fixed number of machines where FPTAS are possible. Significant progress in the study of the MKP was made by Kellerer \cite{kellerer1999polynomial} who showed that the uniform MKP admits a PTAS. Subsequently, Chandra and Khanna \cite{ChekuriK05} gave a PTAS for the general MKP. Clearly, these results also give PTAS algorithms for the dual bin packing problem. However, the PTAS algorithms provided by Kellerer, and Chekuri and Khanna, are relatively complicated algorithms with a technically detailed analysis of correctness. Our goal is to provide a simple online advice algorithm for the dual bin packing problem based on a simpler PTAS for the dual bin packing problem . Thus, as a first step, we provide a simpler PTAS and analysis for the case of the dual bin packing problem. In the second step, we use the simplified PTAS to derive our result for the tape advice-complexity of the online dual bin packing. We use a dynamic programming algorithm to solve a restricted version of the dual bin packing problem instead of relying on IP solvers as in Kellerer's PTAS or the LP solver as in Chekuri and Khanna. This dynamic programming algorithm is essentially the same as the one used in the solution of the makespan problem with a bounded number of different processing times. Our PTAS and its analysis are self-contained and easy to follow. Our work highlights one of the important aspects of simple algorithms, namely, they are usually easier to modify and adapt to other problems and situations. In particular, we are able to easily adapt our simple PTAS to the setting of online tape-advice algorithms.

\section{Preliminaries}
\label{sec:prelim}
The dual bin packing instance is specified by a sequence of $n$ item weights $w_1, w_2, \ldots, w_n$ and $m \in \mathbb{N}$ bins, where $w_i \in (0,1]$. The goal is to pack a largest subset of items into $m$ bins such that for each bin the total weight of items placed in that bin is at most $1$. The problem can be specified as an integer program as follows (notation $[n]$ stands for $\{1,\ldots,n\}$):

\begin{align*}
\text{max. } & \sum_{i=1}^n \sum_{j=1}^m x_{ij}&\\
\text{subj. to } & \sum_{i=1}^n x_{ij}w_i \le 1 & \text{for all }j \in [m]\\
& \sum_{j=1}^m x_{ij} \le 1 & \text{for all }i \in[n]\\
& x_{ij} \in \{0,1\} & \text{for all }i \in [n], j \in [m]\\
\end{align*}

The online First Fit algorithm \textsc{FF} constructs a solution by processing items in the given order $w_1, w_2, \ldots, w_n$ and placing a given item into the first bin into which it fits. First Fit Increasing algorithm \textsc{FFI} first orders the items by increasing weight. Let $\sigma :[n] \rightarrow [n]$ be the corresponding permutation. Then \textsc{FFI} runs \textsc{FF} on the items in the order given by $\sigma$, i.e., $w_{\sigma(1)} \le w_{\sigma(2)} \le \cdots \le w_{\sigma(n)}$. It is easy to see that \textsc{FF} has an unbounded approximation (i.e., competitive)  ratio whereas Coffman et al \cite{CoffmanLT78} show that \textsc{FFI} has a $4/3$ approximation ratio for the dual packing problem. 
Note that the weights only enter the above integer program as constraints and are not part of the objective function. Thus, it is easy to see that the items in an optimal solution are, without loss of generality, a prefix of $w_{\sigma(1)}, w_{\sigma(2)}, \ldots, w_{\sigma(n)}$ packed into appropriate bins.  

Throughout this paper we shall always write $n$ to mean the number of input items, $m$ the number of bins, and $s$ the maximum bit size of an input item ($s$ can be thought of as the  ``word size'' of a computer, on which the given input sequence should be processed).

An online algorithm ALG is said to achieve a \emph{competitive ratio} $c$ for a maximization problem if there exists a constant $\alpha$ such that for all input sequences $I$ we have $\textsc{OPT}(I) \le c \textsc{ALG}(I) + \alpha$, where $\textsc{ALG}(I)$ is the value of the objective that the algorithm achieves on $I$ and $\textsc{OPT}(I)$ is the value achieved by an offline optimal solution. If $\alpha \le 0$, we say that ALG achieves a \emph{strict} competitive ratio $c$.

\section{A Simple PTAS for the Dual Bin Packing Problem}
\label{sec:ptas}

Fix $\epsilon > 0$. Let $S = \{ i \mid w_i \le \epsilon\}$ be the set of small input items, and let $L = \{ i \mid w_i > \epsilon\}$ be the set of large input items. The goal is to pack as many items from $S \cup L$ into $m$ bins as possible. Our first observation is that if the \textsc{FFI} algorithm fills $m$ bins (i.e., does not allow any more items to be packed) using only small items then it already achieves a $1+\epsilon$ approximation.

\begin{lemma}\label{lem:ffi-suffices}
Suppose that when the  \textsc{FFI} algorithm terminates,  it has filled all bins with items of weight at most $\epsilon$. Then \textsc{FFI} achieves $1+\epsilon$ approximation ratio on this instance.
\end{lemma}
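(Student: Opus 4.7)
The plan is to combine two ingredients already available in the preliminaries. The first is the prefix property: we may assume without loss of generality that OPT packs a prefix of the items in non-decreasing weight order, so every item in OPT outside FFI's prefix has weight at least $w^*$, the weight of the first item FFI fails to place. The second is that the hypothesis ``filled all bins with items of weight at most $\epsilon$'' is naturally read as forcing $B_j > 1-\epsilon$ for every bin $j$, since otherwise FFI would have placed yet another small item.

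From these, the calculation would be short. Summing $B_j > 1-\epsilon$ over the $m$ bins yields $W_{\mathrm{ALG}} = \sum_j B_j > m(1-\epsilon)$; since each packed item has weight at most $\epsilon$, this gives $\textsc{ALG} \ge W_{\mathrm{ALG}}/\epsilon > m(1-\epsilon)/\epsilon$. The standard First Fit stopping invariant $B_j > 1-w^*$ yields $W_{\mathrm{OPT}} - W_{\mathrm{ALG}} < m - m(1-w^*) = m w^*$, and since every item in OPT outside FFI's prefix has weight at least $w^*$, the number of such extras is strictly less than $m$. Chaining these bounds gives $\textsc{OPT}/\textsc{ALG} < 1 + m/\textsc{ALG} < 1 + \epsilon/(1-\epsilon) = 1/(1-\epsilon)$, which is the claimed $(1+\epsilon)$-approximation after the constant-factor rescaling of $\epsilon$ flagged in the paper's footnote.

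The delicate step --- and the only real obstacle --- is nailing down the intended reading of ``filled all bins'' so as to extract the invariant $B_j > 1-\epsilon$. Once that is pinned down, the trivial subcase where FFI packs all $n$ items is immediate, the main subcase is the three-line arithmetic above, and the corner case in which FFI terminates on a large item ($w^*>\epsilon$) can be absorbed by using the alternative slack bound $W_{\mathrm{OPT}}-W_{\mathrm{ALG}} < m\epsilon$ together with the fact that all extras are then large (weight $>\epsilon$), which again gives strictly fewer than $m$ extras.
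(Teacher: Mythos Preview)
Your argument is correct and follows essentially the same route as the paper: bound the number of extra items OPT can pack by $m$ (via the free-space/first-rejected-weight argument) and lower-bound $\textsc{FFI}$ by roughly $m/\epsilon$ (since every packed item is small), then take the ratio. The only notable difference is that you carefully obtain $\textsc{ALG} > m(1-\epsilon)/\epsilon$ and hence the ratio $1/(1-\epsilon)$, whereas the paper asserts the slightly stronger $N \ge m/\epsilon$ to get $1+\epsilon$ directly; your version is the more honest bound, and your appeal to the paper's footnote to reconcile the constants is exactly right.
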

\begin{proof}
If \textsc{FFI} packs all items then it clearly finds an optimal solution. Suppose that \textsc{FFI} rejects some items. Let $w$ be the smallest weight of a rejected item. Thus the total remaining free space among all $m$ bins is $< wm$ in the \textsc{FFI} packing. Thus, \textsc{OPT} can pack at most $m-1$ more items, since it can only add items of weight $\ge w$. Let $N$ be the number of items packed by \textsc{FFI}. Then we have
\[ \frac{\textsc{OPT}}{\textsc{FFI}} < \frac{N+m}{N} \le \frac{m/\epsilon+m}{m/\epsilon} = 1+\epsilon,\]
where the second inequality follows from $N \ge m/\epsilon$, since the \textsc{FFI} packing uses only items of weight $\le \epsilon$.
\end{proof}

Thus, the whole difficulty in designing a PTAS for this problem lies in the handling of large items. If \textsc{FFI} terminates before packing all of $S$, then 
the condition of Lemma \ref{lem:ffi-suffices} holds and hence from now on, we consider the case when \textsc{FFI} packs all of $S$. In this case an optimal solution is to pack all of $S$ together with some subset of smallest items from $L$. The strategy for our algorithm is to pack a largest subset $F$ of $L$ that still leaves enough room to pack all of $S$. This means that $w(F) \le m - w(S)$, but we also want to pack all of $S$ efficiently. This can be guaranteed by leaving slightly more room while packing $F$. Namely, $w(F) \le m(1-\epsilon)-w(S)$ guarantees that all of $S$ can be packed efficiently after packing $F$.

\begin{lemma}[Kellerer \cite{kellerer1999polynomial}]
Suppose that we have a packing of $F \subseteq L$ such that $w(F) \le m(1-\epsilon)-w(S)$. Then running \textsc{FFI} with the packing of $F$ as a starting point results in packing all of $S$.
\end{lemma}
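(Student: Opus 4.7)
The plan is to argue by contradiction. Suppose that, starting from the given packing of $F$ and running \textsc{FFI} on the items of $S$ in increasing weight order, some item $s^* \in S$ is eventually rejected. I will derive a contradiction by bounding the total free space across the $m$ bins at the moment $s^*$ is examined from both sides.

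First, I would compute a lower bound on the total free space at that moment. Initially (right after placing $F$), the total free space is $m - w(F) \ge m\epsilon + w(S)$ by the hypothesis on $w(F)$. Up to the moment $s^*$ is processed, \textsc{FFI} has successfully packed some subset $S' \subseteq S \setminus \{s^*\}$, whose total weight is at most $w(S) - w(s^*)$. Hence the total free space at that moment is at least
\[
m - w(F) - w(S') \;\ge\; m\epsilon + w(S) - (w(S) - w(s^*)) \;=\; m\epsilon + w(s^*).
\]

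Next, I would derive an upper bound from the failure to pack $s^*$. Since \textsc{FFI} rejects $s^*$, every one of the $m$ bins must have free space strictly less than $w(s^*)$, so the total free space across the bins is strictly less than $m \cdot w(s^*)$. Combined with $w(s^*) \le \epsilon$ (because $s^* \in S$), this gives total free space $< m\epsilon$. Putting the two bounds together yields $m\epsilon + w(s^*) < m\epsilon$, i.e.\ $w(s^*) < 0$, which is absurd since $w(s^*) > 0$.

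There is no real obstacle here: the argument is a clean pigeonhole-style counting of free space, and the only mildly delicate point is to be careful that the items of $S$ processed before $s^*$ are exactly those that \textsc{FFI} has successfully placed (not all of $S \setminus \{s^*\}$), which is handled by upper-bounding their total weight by $w(S) - w(s^*)$. The key quantitative ingredients are (i) the slack $m\epsilon$ left in the hypothesis $w(F) \le m(1-\epsilon) - w(S)$ and (ii) the defining property $w_i \le \epsilon$ of items in $S$; these two facts are what make the slack just barely big enough to accommodate any remaining small item.
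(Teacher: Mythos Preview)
Your proof is correct and follows essentially the same pigeonhole idea as the paper's own argument: both show that whenever a small item is about to be processed, the total packed weight is at most $m(1-\epsilon)$, so some bin has at least $\epsilon$ free space. The paper phrases this as a direct induction (pack $s_1$, then $s_2$, etc., maintaining the invariant), whereas you phrase it as a proof by contradiction on a hypothetical first rejection; the underlying computation is the same.
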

\begin{proof}
Initially, we have $w(F) \le m(1-\epsilon) - w(S) \le m(1-\epsilon)$. Thus, by the pigeonhole principle there is a bin with $\ge \epsilon$ free space. Thus, we can pack the first item $s_1$ from $S$. Now, we have $w(F) \le m(1-\epsilon) - w(S) \le m(1-\epsilon) - w(s_1)$, i.e., $w(F) + w(s_1) \le m(1-\epsilon)$. Again, by the pigeonhole principle there is a bin with $\ge \epsilon$ free space, so we can pack the second item from $S$, and so on.
\end{proof}
\begin{remark} \label{rem:ff-completion}
Note that the argument in the above lemma does not use the \emph{increasing} property of \textsc{FFI}. Therefore, even \textsc{FF} can be used to complete the partial packing $F$ with all of $S$.
\end{remark}

The next lemma shows that the extra ``breathing room'' that we leave to guarantee an efficient packing of $S$ does not hurt the approximation ratio.

\begin{lemma}
Let $F$ be the largest subset of $L$ that can be packed into $m$ bins with total weight $\le m(1-\epsilon)-w(S)$. Then
\[ \frac{\textsc{OPT}}{|F|+|S|} \le 1+ 3 \epsilon.\]
\end{lemma}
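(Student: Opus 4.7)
The plan is to compare $\textsc{OPT}$ to $|F|+|S|$ by constructing a subset $L^{**}$ of the large items used by an optimal solution, such that $L^{**}$ itself satisfies both the packability constraint and the weight bound $\le m(1-\epsilon)-w(S)$, yielding $|F|\ge |L^{**}|$, and then bounding the loss $|L^{*}|-|L^{**}|$ relative to $|L^{**}|+|S|$.

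First I would set up notation. Since we are in the case where \textsc{FFI} packs all of $S$, by the prefix property recorded in Section~\ref{sec:prelim} an optimal solution may be assumed to pack all of $S$ together with the $p$ smallest items $\ell_1\le\cdots\le\ell_p$ of $L$ (call this set $L^{*}$), so $\textsc{OPT}=p+|S|$. Define $L^{**}:=\{\ell_1,\ldots,\ell_q\}$, where $q$ is the largest index with $w(L^{**})=\sum_{i=1}^{q}\ell_i \le m(1-\epsilon)-w(S)$. Since $L^{**}\subseteq L^{*}$ and $L^{*}$ is packable in $m$ bins alongside $S$, $L^{**}$ is packable too, and by construction it meets the weight bound, hence $|F|\ge q$. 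If $q=p$ the ratio is $1$, so assume $q<p$ and write $\ell:=\ell_{q+1}\in(\epsilon,1]$.

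Next I would derive the two key bounds. The items $\ell_{q+1},\ldots,\ell_p$ each weigh at least $\ell$ and together weigh at most $w(L^{*})-w(L^{**})\le(m-w(S))-(m(1-\epsilon)-w(S)-\ell)=m\epsilon+\ell$, using $w(L^{*})+w(S)\le m$ and the maximality of $q$. Hence $p-q \le m\epsilon/\ell + 1$. Conversely, each of the $q$ items in $L^{**}$ weighs at most $\ell$, so $q\ge w(L^{**})/\ell > (m(1-\epsilon)-w(S))/\ell - 1$; and each small item weighs at most $\epsilon$, so $|S|\ge w(S)/\epsilon$. Adding these yields $|F|+|S| \ge (m(1-\epsilon)-w(S))/\ell - 1 + w(S)/\epsilon$.

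Finally I would combine. Multiplying numerator and denominator of $(p-q)/(q+|S|)$ by $\ell$, the denominator becomes $m(1-\epsilon)-\ell+w(S)(\ell/\epsilon-1)$, where the last term is nonnegative because $\ell>\epsilon$. Thus the ratio is at most $(m\epsilon+\ell)/(m(1-\epsilon)-\ell)$, which is increasing in $\ell$ (derivative $m/(m(1-\epsilon)-\ell)^2>0$) and hence maximized at $\ell=1$, giving $(m\epsilon+1)/(m(1-\epsilon)-1)\le 3\epsilon$ once $m\ge 1/\epsilon+3$; the remaining regime of very small $m$ contains only a constant number of items in $\textsc{OPT}$ and is absorbed. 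The main obstacle I anticipate is exactly this last algebraic step: it is not immediately clear that the two different lower bounds on $q$ and $|S|$ combine cleanly enough to dominate $p-q$ for every admissible $w(S)$ and $\ell$. The clarifying trick is to multiply through by $\ell$ to put both bounds into the same units, which exposes the $w(S)$-dependent contribution to the denominator as a nonnegative bonus (precisely because $\ell>\epsilon$) and leaves a single-variable function of $\ell$ that is extremized at the boundary.
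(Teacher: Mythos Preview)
Your route is essentially the paper's: pick the pivot weight $\ell$ (the paper calls it $w$, the smallest weight in $L\setminus F$), lower-bound $|F|+|S|$ by roughly $m(1-\epsilon)/\ell$, and upper-bound the surplus $\textsc{OPT}-(|F|+|S|)$ by roughly $\epsilon m/\ell$. Constructing $L^{**}$ just rephrases the paper's direct bound on $|F|$. The one algebraic simplification you miss is that the paper uses the weaker estimate $|S|\ge w(S)/w$ (valid because every small item weighs at most $\epsilon<w$) instead of your sharper $|S|\ge w(S)/\epsilon$; with that choice the $w(S)$ contributions cancel exactly, giving $|F|+|S|\ge m(1-\epsilon)/w$ with no ``nonnegative bonus'' to isolate and no maximization over $\ell$ to perform.

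Your closing hand-wave for small $m$ is not a valid step: $\textsc{OPT}$ is not bounded when $m$ is small (there may be arbitrarily many items in $S$), and the lemma carries no additive constant to ``absorb'' anything. What has actually happened is that your careful tracking of the $\pm 1$ terms exposes slack that the paper silently drops --- its claims $|F|\ge (m(1-\epsilon)-w(S))/w$ and ``free space $<\epsilon m$'' are each off by an additive $1$ in the same way your bounds are --- so the small-$m$ looseness is present in the paper's argument too, just hidden. If you want your write-up to match the paper's level of precision, replace $|S|\ge w(S)/\epsilon$ by $|S|\ge w(S)/\ell$; then the ratio collapses to $1/(1-\epsilon)\le 1+3\epsilon$ directly, modulo those same $O(1)$ rounding terms.
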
 
\begin{proof}
If $F=L$ then we are done. Otherwise, let $w > \epsilon$ be the smallest weight of an item from $L \setminus F$. Then $|S| \ge w(S)/\epsilon \ge w(S)/w$ and $|F| \ge \frac{m(1-\epsilon)-w(S)}{w}$. Thus, $|F|+|S| \ge \frac{m(1-\epsilon)}{w}$. The total free space after packing $F \cup S$ is $< \epsilon m$. Thus, \textsc{OPT} can pack at most $\frac{\epsilon m}{w}$ more items than $|F|+|S|$. Combining all of the above, we have
\[ \frac{\textsc{OPT}}{|F|+|S|} \le \frac{|F|+|S|+\epsilon m/w}{|F|+|S|} \le \frac{m(1-\epsilon)/w + \epsilon m/w}{m(1-\epsilon)/w}  = \frac{1}{1-\epsilon} \le 1+3 \epsilon,\]
where the last inequality holds for small epsilon; i.e., $\epsilon \le 2/3$.
\end{proof}

We shall refer to the problem of finding $F$ as in the above lemma as the LFP (``the large $F$ problem'').

\begin{remark} Suppose that $F$ is an approximation to the LFP with an additive $\epsilon m$ term, i.e. $|F| \ge \textsc{OPT}_{LFP} - \epsilon m$. Then an argument similar to the one used in the above lemma shows that $F$ together with $S$ still gives $1+\Theta(\epsilon)$ approximation to the original dual bin packing problem. Thus, it suffices to find a good enough $F$.
\end{remark}

Before we show how to find a good approximation to the LFP, we show how to solve the dual bin packing optimally in polynomial time when the number of distinct weights of the input items is fixed. As previously stated, this follows from the known PTAS for the makespan problem. (See section 10.2 of the Vazirani text \cite{vazirani-text}.)

\begin{lemma}
\label{lem:dp}
We can solve the dual bin packing problem optimally in time $O(n^{2k} m)$ where $k$ is the number of distinct weights of the input items.
\end{lemma}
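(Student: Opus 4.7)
The plan is a direct dynamic programming argument in which the state records the remaining supply of each weight class. Let the $k$ distinct weights be $v_1 < v_2 < \cdots < v_k$, and let $n_i$ be the multiplicity of weight $v_i$, so that $n_1 + \cdots + n_k = n$. A single bin is described by a \emph{configuration} $(a_1,\ldots,a_k) \in \mathbb{Z}_{\ge 0}^k$ with $a_i \le n_i$ and $\sum_i a_i v_i \le 1$. Because bins are identical and items of the same weight are interchangeable, an optimal packing is determined up to relabeling by the multiset of configurations assigned to the $m$ bins.

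I would define $f(j, r_1, \ldots, r_k)$ to be the maximum number of items that can be packed into $j$ bins using at most $r_i$ items of weight $v_i$, with base case $f(0, r_1, \ldots, r_k) = 0$ and recurrence
\[
f(j, r_1, \ldots, r_k) \;=\; \max_{(a_1,\ldots,a_k)} \Bigl[\, \textstyle\sum_i a_i \;+\; f\bigl(j-1,\, r_1 - a_1,\ldots, r_k - a_k\bigr)\Bigr],
\]
where the maximum is over configurations $(a_1,\ldots,a_k)$ satisfying $a_i \le r_i$ and $\sum_i a_i v_i \le 1$. The optimum for the instance is $f(m, n_1, \ldots, n_k)$, and the actual packing is reconstructed by standard backpointers that record the maximizing configuration at each step.

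Next I would bound the running time. The state space has at most $m \cdot \prod_{i=1}^k (n_i + 1) \le m \cdot (n+1)^k$ entries, and for each state the recurrence ranges over at most $\prod_{i=1}^k (n_i+1) \le (n+1)^k$ configurations. Feasibility of a configuration is checked in $O(k) = O(n)$ time (or in $O(1)$ amortized by enumerating configurations incrementally), which is dominated by the $(n+1)^k$ factor. Hence the total time is $O\bigl(m \cdot (n+1)^{2k}\bigr) = O(n^{2k} m)$.

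There is no real obstacle; the only thing worth a sentence of care is the correctness claim that restricting to multiset-of-configurations choices loses nothing. This follows because the objective and constraints depend only on how many items of each weight class end up in each bin, so any optimal packing can be relabeled to one produced by the DP, and conversely every DP choice yields a feasible packing. Thus the DP computes the optimal dual bin packing value in the stated time.
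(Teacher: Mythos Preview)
Your argument is correct and is essentially identical to the paper's own proof: both set up a dynamic program indexed by the number of remaining bins and a $k$-tuple of remaining multiplicities, with the recurrence ranging over single-bin configurations, and both count $O(n^k m)$ states times $O(n^k)$ transitions to get $O(n^{2k}m)$. The only cosmetic difference is that you include an explicit sentence justifying why tracking only multiplicities (rather than item identities) loses nothing, which the paper leaves implicit.
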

\begin{proof}
The algorithm is a simple dynamic programming. Let $w_1, \ldots, w_k$ be the distinct weights appearing in the input. The entire input sequence can be described by a $k$-tuple $(n_1, \ldots, n_k)$, where $n_i$ is the number of items of weight $w_i$ and $n = \sum_i n_i$. Note that the number of different possible $k$-tuples with $n$ items is $O(n^k)$. Let $\mathcal{K}$ be the set of distinct $k$-tuples such that each of its element fits entirely in a single bin, i.e., $(\ell_1, \ldots, \ell_k)$ such that  $\sum_i \ell_i w_i \le 1$. The dynamic programming table $D$ is going to be indexed by the number of available bins $m'$ and a possible $k$-tuple $(\ell_1, \ldots, \ell_k)$ such that $0 \le \ell_i \le n_i$. The value $D[(\ell_1, \ldots, \ell_k), m']$ is going to indicate the maximum number of items that can be packed from the input sequence described by the state $(\ell_1, \ldots, \ell_k)$ in $m'$ bins. Let $\mathcal{L} = \{(\ell_1', \ldots, \ell_k') \mid \forall i~ 0\le \ell_i' \le \ell_i\}.$  An optimal solution to the subproblem indexed by $(\ell_1, \ldots, \ell_k)$ and $m'$ consists of a packing of some element from $\mathcal{L}$ into a single bin, and packing the remaining input items into $m'-1$ bins:
\[ D[(\ell_1, \ldots, \ell_k), m'] = \max_{(\ell_1', \ldots, \ell_k') \in \mathcal{K} \cap \mathcal{L}} \sum_{i} \ell_i' + D[(\ell_1 - \ell_1', \ldots, \ell_k - \ell_k'), m'-1].\]
The base case is given by the states where either $m'=0$ or $\sum_i \ell_i = 0$, in which case we cannot pack any items. The overall runtime of this algorithm is $O(n^{2k} m)$ since the dynamic programming table has $O(n^k m)$ entries and each entry can be computed in time $O(n^k)$ with appropriate preprocessing of the input data. As usual, this dynamic program can be easily modified to return the actual packing rather than the number of packed items.
\end{proof}

Let $L'$ be the subset of the smallest items from $L$ such that $|L'|$ is as large as possible subject to $w(L') \le m(1-\epsilon)-w(S)$. We would like to find $F$ by running the dynamic programming algorithm on $L'$. Unfortunately, $L'$ can have too many distinct inputs. The idea is to group items of $L'$ into few groups depending only on $\epsilon$, reassign all weights of elements within a single group to the weight of the largest element in that group, and run the dynamic programming algorithm on the new problem instance. Then, we will need to argue that the resulting solution is an additive $\epsilon m$ approximation to the LFP.

Let $\ell = |L'|$. We can assume $\ell > m$ otherwise there is a trivial way to pack $\ell$ items into $m$ bins. Assume for simplicity that $m \epsilon$ is an integer and that $k = \ell/(m\epsilon)$ is also an integer. Then, we split $L'$ into $k$ groups of $m\epsilon$ elements each. Let $w_{j_1} \le w_{j_2} \le \cdots \le w_{j_\ell}$ be the weights of elements in $L'$. Define $L_i$ to be the $i$th group consisting of items of weights $w_{j_{1+(i-1)m\epsilon}}, \ldots, w_{j_{i m \epsilon}}$. Reassign the weights of elements in $L_i$ to be $w_{j_{i m \epsilon}}$. Let $\widetilde{w}$ denote the modified weights. Thus, we get an instance with $k$ distinct weights, where $k = \ell/(m\epsilon)$. Note that $\ell \le m/\epsilon$ since we are dealing with large items, so $k \le 1/\epsilon^2$. Thus, we can solve this instance in time $O(n^{2/\epsilon^2}m)$ by Lemma~\ref{lem:dp}. Let $F'$ denote this solution. Let $F$ denote an optimal solution to LFP with the original weights. Then, we have the following.

\begin{lemma}
$F'$ is feasible with respect to weights $w$ and $|F'| \ge |F| - \epsilon m$.
\end{lemma}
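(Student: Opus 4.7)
The plan is to establish the two claims separately. For feasibility, the key observation is that $\widetilde{w}_i \ge w_i$ for every $i \in L'$, since within each group $L_i$ we replaced every weight by $w_{j_{im\epsilon}}$, the largest weight of the group. Hence any bin assignment that fits under $\widetilde{w}$ fits under $w$, and since $F'$ is produced by the dynamic program of Lemma~\ref{lem:dp} on weights $\widetilde{w}$, it is automatically feasible under $w$.

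For the size bound, I will exhibit a packing $F''$, feasible under $\widetilde{w}$ and using only items of $L'$, satisfying $|F''| \ge |F| - \epsilon m$. Optimality of $F'$ for the DP then gives $|F'| \ge |F''|$, as required. The first preparatory step is to argue, without loss of generality, that $F \subseteq L'$ via a standard swap argument: any item $i \in F \setminus L'$ is heavier than any item $j \in L' \setminus F$ (since $L'$ collects the smallest items of $L$ up to the weight budget), so swapping $j$ for $i$ in its bin preserves both the budget $m(1-\epsilon) - w(S)$ and every bin capacity. Iterating yields an optimum $F \subseteq L'$.

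With $F \subseteq L'$, label the positions of $F$ in the sorted order of $L'$ as $r_1 < r_2 < \cdots < r_{|F|}$ with bin assignments $b_1, \ldots, b_{|F|}$. I would define $F''$ by shifting each used position down by exactly $m\epsilon$:
\[
F'' := \{\, j_{r_t - m\epsilon} : r_t > m\epsilon \,\}, \quad \text{with } j_{r_t - m\epsilon} \text{ placed into bin } b_t.
\]
The only positions dropped are those with $r_t \le m\epsilon$, of which there are at most $m\epsilon$ (they all sit inside $L_1$), so $|F''| \ge |F| - \epsilon m$. The mapping $t \mapsto r_t - m\epsilon$ is injective since it is strictly increasing.

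The main obstacle, and the heart of the argument, is verifying bin feasibility of $F''$ under $\widetilde{w}$. If $r_t$ lies in group $L_i$ (automatically $i \ge 2$ whenever $r_t > m\epsilon$), then $r_t - m\epsilon$ lies in $L_{i-1}$, so
\[
\widetilde{w}_{j_{r_t - m\epsilon}} \;=\; w_{j_{(i-1)m\epsilon}} \;\le\; w_{j_{r_t}},
\]
where the inequality uses $r_t \ge (i-1)m\epsilon + 1$ together with the sorted ordering of $L'$. Summing over items assigned to any fixed bin $b$, the $\widetilde{w}$-load of $b$ in $F''$ is bounded by the $w$-load of $b$ in $F$, which is at most $1$. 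The critical design choice is to shift by exactly $m\epsilon$ positions: this drops each used item down by exactly one group, making the rounded weight of the replacement no larger than the original weight of the item it replaces.
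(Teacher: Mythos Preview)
Your proof is correct and follows essentially the same approach as the paper's: both establish feasibility via $\widetilde{w} \ge w$ and then build a witness $F''$ by dropping $F \cap L_1$ and shifting every item of $F \cap L_i$ down into $L_{i-1}$ to obtain a packing that is feasible under~$\widetilde{w}$. Your version is a bit more explicit---you justify the reduction to $F \subseteq L'$ via a swap argument and give a concrete injective shift map rather than the paper's ``arbitrary items from $L_{i-1}$''---but the core idea is identical.
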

\begin{proof}
Since $F'$ is feasible with weights $\widetilde{w}$ and $\widetilde{w} \ge w$, we immediately conclude that $F'$ is feasible with respect to $w$. Rather than directly showing $|F'| \ge |F| - \epsilon m$,  we show how to construct a set $F''$ from $F$ such that $|F''| \ge |F| - \epsilon m$ and $F''$ is feasible with respect to $\widetilde{w}$. This will prove the lemma, since $F'$ is a maximum cardinality set that satisfies the feasibility constraints (i.e., $|F'| \geq |F''|$ ). To construct such $F''$, we can simply drop all items from $F \cap L_1$ and replace all items from $F \cap L_i$ by arbitrary items from $L_{i-1}$ for $i \ge 2$. Note that $|F''| = | F \setminus L_1 | \ge |F| - \epsilon m$. Moreover, $F''ß$ is feasible with respect to $\widetilde{w}$ since we are always replacing large weight items by smaller weight items.
\end{proof}

This completes the argument that approximately solving the LFP using the reassigned weights and the dynamic programming followed by \textsc{FFI} on small items gives a $1+\Theta(\epsilon)$ approximation. The running time of the dynamic programming is $O(n^{2/\epsilon^2}m)$. One can run  \textsc{FFI} in  $O(n \log n +nm)$ time. The overall running time of our PTAS algorithm is $O(n^{2/\epsilon^2}m + n \log n)$, which is clearly polynomial when $\epsilon$ is fixed. Algorithm~\ref{algo:ptas} desribes this PTAS.

\begin{algorithm}[!h]
\caption{Our PTAS for the dual bin packing.}\label{algo:ptas}
\begin{algorithmic}
\Procedure{Dual Bin Packing PTAS}{$w_1 \le \cdots \le w_n, m, \epsilon$}
\State{Let $S = \{i \mid w_i \le \epsilon\}$}
\If{\textsc{FFI}($S,m$) packs $< |S|$ items}
\Return{\textsc{FFI}($S,m$)}
\EndIf
\State{Let $L' = \{|S|+1, \ldots, |S|+\ell\}$ be the indices of the  largest subset of $L$ such that $w(L') \le m(1-\epsilon)-w(S)$}
\State{Let $k = \ell/(m\epsilon)$}
\State{Let $\widetilde{w}$ denote new weights where items with indices $\{|S|+(i-1)m\epsilon+1, \ldots, |S|+im\epsilon\}$ all receive weight $w_{|S|+im\epsilon}$}
\State{Use the algorithm of Lemma~\ref{lem:dp} to obtain a packing of items $F'$ with modified weights $\widetilde{w}$}
\State{Regard $F'$ as a packing with the original weights}
\State{Run $FF$ on $S$ with the packing of $F'$ as a starting point} \\
\Return{the resulting packing}
\EndProcedure
\end{algorithmic}
\end{algorithm}


Summarizing, in this section we proved the following theorem.

\begin{theorem}\label{thm:ptas}
Algorithm~\ref{algo:ptas} is a PTAS for the dual bin packing problem.
\end{theorem}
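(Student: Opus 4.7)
The plan is to assemble the theorem from the lemmas already proved, essentially verifying that Algorithm~\ref{algo:ptas} executes the two cases analyzed above and that the overall running time remains polynomial for fixed $\epsilon$. First I would split on the behavior of \textsc{FFI} applied to $S$. If \textsc{FFI}$(S,m)$ does not pack all of $S$, then the returned packing satisfies the hypothesis of Lemma~\ref{lem:ffi-suffices} (all bins are filled exclusively with items of weight $\le \epsilon$), so the output already achieves a $1+\epsilon$ approximation and we are done.

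Otherwise \textsc{FFI} can pack all of $S$, and the optimum consists of $S$ together with some prefix of $L$ by the prefix observation in Section~\ref{sec:prelim}. Here I would walk through the three-step reduction that the algorithm performs. First, by the lemma bounding $\mathsf{OPT}/(|F|+|S|)$, solving LFP exactly on $L$ (with the slack $w(F) \le m(1-\epsilon)-w(S)$) and then completing with $S$ already yields a $1+3\epsilon$ approximation; by the remark following that lemma, any additive-$\epsilon m$ approximation to LFP still achieves a $1+\Theta(\epsilon)$ approximation to the dual bin packing objective. Second, the weight-rounding lemma guarantees that the $F'$ returned by the dynamic programming of Lemma~\ref{lem:dp} on the rounded instance $\widetilde{w}$ is feasible with respect to the original weights $w$ and satisfies $|F'| \ge |F|-\epsilon m$, so $F'$ is an additive-$\epsilon m$ approximation to LFP. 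Third, since $w(F') \le w(F) \le m(1-\epsilon)-w(S)$, the Kellerer lemma ensures there is enough slack to complete the packing with all of $S$, and by Remark~\ref{rem:ff-completion} it suffices to finish with \textsc{FF} rather than \textsc{FFI}. Combining these three observations gives that the returned packing has cardinality at least $|F'|+|S| \ge (1-\Theta(\epsilon))\,\mathsf{OPT}$.

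For the running time, I would note that the grouping step produces at most $k = \ell/(m\epsilon)$ distinct weights, and since all items in $L'$ have weight $> \epsilon$ we have $\ell \le m/\epsilon$, so $k \le 1/\epsilon^2$. Thus the dynamic programming of Lemma~\ref{lem:dp} runs in $O(n^{2/\epsilon^2}m)$ time; sorting and the two \textsc{FF}/\textsc{FFI} passes add $O(n\log n + nm)$; and the overall bound $O(n^{2/\epsilon^2} m + n\log n)$ is polynomial for each fixed $\epsilon > 0$. Finally I would replace $\epsilon$ by $\epsilon/c$ for a suitable constant $c$ to absorb the factor hidden in $\Theta(\epsilon)$, giving the desired $1+\epsilon$ approximation.

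There is no real obstacle here: the technical work is carried by the preceding lemmas, and the only point that warrants care in the write-up is the bookkeeping that the assumptions of each lemma are in fact satisfied when invoked by the algorithm; in particular, that feasibility of $F'$ with respect to $w$ and the slack $w(F') \le m(1-\epsilon)-w(S)$ are both preserved when we switch back from $\widetilde{w}$ to $w$, which follows from $\widetilde{w} \ge w$ coordinatewise.
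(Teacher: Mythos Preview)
Your approach is exactly the paper's: the theorem is a summary of the preceding lemmas, and you assemble them in the right order with the right running-time computation.

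One small bookkeeping slip in your third step: you write $w(F') \le w(F) \le m(1-\epsilon)-w(S)$, but there is no reason why $w(F') \le w(F)$ should hold --- $F'$ and $F$ are different subsets of items and neither is contained in the other. Your later justification via $\widetilde w \ge w$ does not help either, since the dynamic program only enforces feasibility in the $m$ bins, not the total-weight cap. The correct (and easier) reason the slack is preserved is simply that the dynamic program runs on $L'$, so $F' \subseteq L'$ and hence $w(F') \le w(L') \le m(1-\epsilon)-w(S)$ by the definition of $L'$. With that one line fixed, your write-up is complete and matches the paper.
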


\section{Advice Complexity of the Online Dual Bin Packing Problem for Bounded Bit Size of Input Items}
\label{sec:advice}
In this section, we consider the online version of the dual bin packing problem in the tape-advice model. Let $s$ be the maximum bit-size of an input item weight. Then the input bit-length is $O(sn)$. Based on the PTAS in Algorithm \ref{algo:ptas}, we develop an online algorithm that achieves $1+\epsilon$ approximation to the dual bin packing problem with $O\left( \frac{s + \log n}{\epsilon^2}\right)$ bits of advice. 
Before we prove the main result of this section, we need to modify Lemma~\ref{lem:ffi-suffices} to work in the online setting. Recall that Lemma~\ref{lem:ffi-suffices} detects when \textsc{FFI} is already successful enough that we don't need to do any extra work to obtain a $1+\epsilon$ approximation. An online algorithm does not have the ability to sort the input items, thus we would like to obtain a version of Lemma~\ref{lem:ffi-suffices} that detects when \textsc{FF} obtains a $1+\epsilon$ approximation. The restricted subsequence first fit (\textsc{RSFF}) algorithm given by Renault \cite{renault2016online} is what we need. Let $W = w_1, \ldots, w_n$ be the sequence of weights given to the online algorithm. For a value $\eta$ we define $W_\eta$ to be the subsequence $(w_i \mid w_i \le \eta)$. The \textsc{RSFF} algorithm finds the largest value of $\eta$ such that \textsc{FF} packs all items in $W_\eta$ and then returns \textsc{FF}$(W_{\eta})$. Without loss of generality, we may assume that $\eta$ is one of the $w_i$. 

\begin{lemma}[Implicit in Renault \cite{renault2016online}]
\label{lem:renault}
If \textsc{RSFF} identifies an $\eta$ such that $\eta \le \epsilon$ then \textsc{RSFF} achieves a $1+\epsilon$ approximation ratio.
\end{lemma}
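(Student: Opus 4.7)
The plan is to mirror the proof of Lemma~\ref{lem:ffi-suffices}, with RSFF taking the role of FFI. Let $N = |W_\eta|$ be the number of items packed by RSFF; every such item has weight at most $\eta \le \epsilon$. If $W_\eta = W$, i.e., $\eta$ equals the maximum input weight, then RSFF packs the whole input and is trivially optimal. So I would assume $W_\eta \ne W$ and let $\eta' > \eta$ be the next distinct input weight.

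By the maximality of $\eta$ in the definition of RSFF, running FF on $W_{\eta'}$ must reject at least one item; let $w^\ast \le \eta'$ be the weight of the first such rejection. At the moment of rejection each of the $m$ bins has strictly less than $w^\ast$ free space, so the total packed weight of FF$(W_{\eta'})$ at that instant exceeds $m - w^\ast m$. This is the online surrogate for ``FFI has filled all bins'' in Lemma~\ref{lem:ffi-suffices}: it certifies that the ambient packing is nearly saturated.

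Once I have a near-saturation statement for the packing produced by RSFF, the rest is the arithmetic at the end of Lemma~\ref{lem:ffi-suffices}. Since every packed item has weight at most $\epsilon$, near-saturation gives a lower bound of the form $N \gtrsim m(1-\epsilon)/\epsilon$; and because the remaining free space is $O(\epsilon m)$ while every unpacked item OPT could use has weight $\ge \eta$, one has $\textsc{OPT} - N \le O(m)$. Dividing yields $\textsc{OPT}/\textsc{RSFF} \le (N+m)/N \le 1 + O(\epsilon)$, and a rescaling of $\epsilon$ (as the paper does elsewhere) delivers the claimed $1+\epsilon$ bound.

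The main obstacle is the transfer from FF$(W_{\eta'})$ to FF$(W_\eta)$, the packing RSFF actually returns. Unlike FFI on a sorted sequence, FF is order-sensitive: FF on $W_\eta$ and FF on $W_{\eta'}$ can produce quite different packings, since items of weight in $(\eta, \eta']$ may both consume capacity and displace later $W_\eta$-items in the $W_{\eta'}$-run. The delicate step is therefore to translate the near-saturation inferred from the first-rejection moment in FF$(W_{\eta'})$ into a corresponding near-saturation of the final packing FF$(W_\eta)$. My plan for this step is to compare the two packings bin by bin, using the ``first fit'' invariant together with $W_\eta \subseteq W_{\eta'}$, to argue that each bin of FF$(W_\eta)$ is at least as saturated as the $W_\eta$-restriction of the corresponding bin of FF$(W_{\eta'})$, so that the near-saturation propagates and the $1+\epsilon$ bound follows.
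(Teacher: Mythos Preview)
The paper does not actually prove this lemma; it is attributed to Renault without argument, so there is no paper proof to compare against. Assessing your proposal directly: your identification of the ``transfer'' from $\textsc{FF}(W_{\eta'})$ to $\textsc{FF}(W_\eta)$ as the crux is correct, but your proposed bin-by-bin saturation claim is false. Take $m=2$ and input $(0.5,\,0.4,\,0.4)$ with $\eta=0.4$, $\eta'=0.5$. Then $\textsc{FF}(W_\eta)$ places both $0.4$-items in bin~$1$, giving loads $(0.8,\,0)$, whereas in $\textsc{FF}(W_{\eta'})$ the $W_\eta$-items land one per bin, giving restricted loads $(0.4,\,0.4)$; bin~$2$ is strictly \emph{less} saturated in $\textsc{FF}(W_\eta)$ than in the restriction. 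So the first-fit invariant does not push saturation in the direction you need.

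Worse, the lemma as the paper literally states it appears to be false, so no transfer argument can succeed. With $m=100$, $\epsilon=0.1$, and input consisting of a single item of weight $0.05$ followed by $2000$ items of weight $0.1$, one has that $\textsc{FF}(W_{0.05})$ packs its one item while $\textsc{FF}(W_{0.1})$ rejects; hence the paper's $\eta$ equals $0.05\le\epsilon$, yet $\textsc{RSFF}$ packs a single item while $\textsc{OPT}=1000$. The hypothesis $\eta\le\epsilon$ simply does not control how many items $W_\eta$ contains. What is presumably intended (and what makes the analogy with Lemma~\ref{lem:ffi-suffices} go through) is to work with the \emph{smallest} threshold $\eta^\ast$ at which $\textsc{FF}(W_{\eta^\ast})$ rejects, assume $\eta^\ast\le\epsilon$, and have $\textsc{RSFF}$ return $\textsc{FF}(W_{\eta^\ast})$. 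Under that reading your near-saturation argument applies directly to $\textsc{FF}(W_{\eta^\ast})$ at the moment of first rejection, the transfer step evaporates, and the remainder of your outline is exactly the mirror of Lemma~\ref{lem:ffi-suffices} you intended.
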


By replacing \textsc{FFI} with \textsc{RSFF} in the first step of Algorithm~\ref{algo:ptas}, we obtain the main result of this section.

\begin{theorem}
There is an online algorithm achieving a $1+\Theta(\epsilon)$ strict competitive ratio for the dual bin packing problem with $O\left( \frac{s + \log n}{\epsilon^2}\right)$ bits of advice, where $s$ is the maximum bit-size of an input item. 
\end{theorem}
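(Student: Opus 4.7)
The plan is to recast the PTAS of Algorithm~\ref{algo:ptas} as an online advice algorithm: the oracle pre-computes the few parameters that describe what Algorithm~\ref{algo:ptas} would do, and the online algorithm replays the same packing as items arrive. The key observation is that the hard subroutine, the dynamic program of Lemma~\ref{lem:dp}, runs on a rounded instance with only $k = 1/\epsilon^2$ distinct weights and is completely determined by these rounded weights and their multiplicities, both of which can be conveyed in $O((s+\log n)/\epsilon^2)$ bits.

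I would design the advice in two cases. The oracle first runs \textsc{RSFF} to find its threshold $\eta$. If $\eta \le \epsilon$, the advice is a flag bit together with the position of $\eta$, in $O(s + \log n)$ bits; the online algorithm then simply executes \textsc{RSFF} on items of weight $\le \eta$ and inherits the $1+\epsilon$ ratio from Lemma~\ref{lem:renault}. Otherwise, the oracle executes the hard branch of Algorithm~\ref{algo:ptas} (with the slight modification that $L'$ is chosen so that $w(L') \le m(1-2\epsilon) - w(S)$ to absorb the error introduced by rounding) and records: (i) the $k$ group-boundary weights $w_{j_{im\epsilon}}$ of $L'$, costing $O(s/\epsilon^2)$ bits; (ii) the $k$ group-counts $n_1,\dots,n_k$ of $L'$, costing $O(\log n/\epsilon^2)$ bits; and (iii) $O(\log n)$ bits of housekeeping ($|S|$, $|L'|$, and a tie-breaking rule for items whose weight coincides with a group boundary). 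From (i) and (ii) the online algorithm locally reruns the DP of Lemma~\ref{lem:dp} and deterministically recovers the same count matrix $c_{i,j}$ specifying how many group-$j$ items the DP would place in bin $i$, along with the reservation $R_i = \sum_j c_{i,j}\,\widetilde w_j$ per bin.

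The online algorithm then processes items in the adversarial order. A large item ($w_t > \epsilon$) is binary-searched among the group boundaries into a group $j$; if fewer than $n_j$ items of group $j$ have been accepted so far, it is placed into the smallest-indexed bin whose quota $c_{i,j}$ is not yet met, and otherwise it is discarded. A small item ($w_t \le \epsilon$) is packed by First Fit into the effective residual capacity $(1 - R_i) - (\text{small weight already in bin } i)$; Remark~\ref{rem:ff-completion} justifies using \textsc{FF} in place of \textsc{FFI} here. Since original weights never exceed rounded weights, the large items in bin $i$ never overflow their reservation $R_i$, and small items respect $1 - R_i$ by construction, so feasibility holds throughout.

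The main obstacle I anticipate is showing that the interleaved scheme still packs enough items. On the large side, the rounding analysis of Lemma~\ref{lem:dp} gives $|F'| \ge |F| - \epsilon m$, and the online rule accepts exactly these. On the small side one must show all of $S$ is packed even though small items arrive arbitrarily interleaved with large items; this is where the stricter slack in $L'$ enters, yielding $m - \sum_i R_i \ge w(S) + \Omega(\epsilon m)$, so the pigeonhole argument of the Kellerer lemma certifies a bin with at least $\epsilon$ effective free space at every step, and First Fit successfully places every small item. Combining the two counts with the Kellerer-style ratio lemma that precedes Lemma~\ref{lem:dp} yields a $1 + \Theta(\epsilon)$ strict competitive ratio, and the total advice budget $O(s/\epsilon^2) + O(\log n/\epsilon^2) + O(s + \log n)$ matches the claimed $O((s + \log n)/\epsilon^2)$.
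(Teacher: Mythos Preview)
Your proposal is correct and follows essentially the same approach as the paper: the oracle transmits the RSFF threshold together with the $k\le 1/\epsilon^2$ rounded weights $\widetilde w_i$ and multiplicities $|L_i|$, the algorithm reruns the dynamic program locally to preallocate slots for large items, and small items are packed First Fit into the complementary space. The one place you are more careful than the paper is in tightening the slack to $w(L')\le m(1-2\epsilon)-w(S)$; the paper reserves rounded-weight space $\sum_i R_i=\widetilde w(F')$ for large items and simply asserts that all small items fit, whereas your extra $\epsilon m$ of slack is exactly what is needed to make the Kellerer pigeonhole argument go through against $\widetilde w(F')$ rather than $w(F')$ (since $\widetilde w(L')\le w(L')+\epsilon m$ by the group-shift argument), and it costs only a constant factor in the final ratio.
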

\begin{proof}
The advice is obtained by slightly modifying the PTAS from Section~\ref{sec:ptas}. At first, the oracle writes down the value of $\eta$ identified by running \textsc{RSFF} on the input sequence. For later convenience, we rename $\eta$ by $\widetilde{w}_0$. This takes $O(s)$ bits of advice. This is analogous to running \textsc{FFI} in the original PTAS. Recall that the PTAS creates $k \le 1/\epsilon^2$ groups of large input items $L_i$ for $i \in [k]$ with the corresponding rounded weights $\widetilde{w}_i$ for $i \in [k]$. The oracle appends $|L_i|$ together with $\widetilde{w}_i$ for $i \in [k]$ to the advice string. This completes the specification of the advice string. The length of the advice string is $O(s + k \log |L_i| + k s) = O\left( \frac{s + \log n}{\epsilon^2}\right)$.

It is left to see that with this advice string an online algorithm can computes a $1+\Theta(\epsilon)$ approximate solution to the instance of the dual bin packing problem. Observe that if $\widetilde{w}_0 \le \epsilon$ then by Lemma~\ref{lem:renault} the solution obtained by 
running \textsc{FF} on all items of weight $\le \widetilde{w}_0$  achieves $1+\epsilon$ approximation, since this gives us exactly the packing produced by \textsc{RSFF}. From now on, we consider the case $\widetilde{w}_0 > \epsilon$. Then an optimal solution might use large items. Recall that the PTAS creates a solution to the rounded instance encoded by $(|L_1|, \ldots, |L_k|)$ and weights $(\widetilde{w}_1, \ldots, \widetilde{w}_k)$, replaces this solution with actual weights of the corresponding items and fills the rest in \textsc{FF} fashion with the rest of the items (see Remark~\ref{rem:ff-completion}). Thus, knowing  $(|L_1|, \ldots, |L_k|)$ and weights $(\widetilde{w}_1, \ldots, \widetilde{w}_k)$ from the advice, our online algorithm can reserve place holders for items in bins according to the dynamic programming solution. We refer to this space as the preallocated space, and we refer to the complement of it as the remaining space. For example, if dynamic programming solution says that bin 1 contains $\ell_i$ items of weight $\widetilde{w}_i$ then the online algorithm reserves $\ell_i$ slots of weight $\widetilde{w}_i$ in bin 1. The preallocated space in bin 1 is $\sum_{i} \ell_i \widetilde{w}_i$ and the remaining space in bin 1 is $1 - \sum_{i}\ell_i \widetilde{w}_i$. Now, the algorithm is ready to process the items in the online fashion. When the algorithm receives an input item of weight $\le \epsilon$ it packs it in the remaining space in \textsc{FF} fashion. When the algorithm receives an item of weight $\in (\widetilde{w}_{i-1}, \widetilde{w}_i]$, it packs it into the first available preallocated slot of weight $\widetilde{w}_i$. By the construction of advice, we are guaranteed that when the algorithm is done processing the inputs, all preallocated slots are occupied and all small items are packed. By Theorem~\ref{thm:ptas} this solution is a $1+\epsilon$ approximation. 
\end{proof}

\section{Advice Complexity of the Online Dual Bin Packing Problem for General Weights}
\label{sec:advice-general}
In this section we show that the online dual bin packing without any restrictions on $s$ requires $\Theta_\epsilon(n)$ advice to approximate \textsc{OPT} within $1+\epsilon$. Observe that the upper bound $O(n/\epsilon)$ immediately follows from the result of Renault~\cite{renault2016online} in the per-request advice model. A somewhat stronger upper bound, $(1-\Omega(\epsilon))n$, follows by observing that the dual bin packing belongs to the advice complexity class AOC defined by Boyar et al.~\cite{boyar2016advice} and then using the results from ~\cite{boyar2016advice}. Thus, we only need to prove that in the case of unrestricted $s$ the lower bound of $\Omega_\epsilon(n)$ holds. For sufficiently small $\epsilon$, we show a nearly matching lower bound of $(1-O(\epsilon \log (1/\epsilon)))n = \Omega_\epsilon(n)$ in the tape-advice model. We establish our lower bound by providing a reduction (that preserves the precision, advice and competitive ratio) from an online problem known to require a lot of advice to the dual bin packing problem. The starting point is the binary separation problem defined by Boyar et al.~\cite{boyar2016binpacking}.

\begin{definition}[Boyar et al.~\cite{boyar2016binpacking}]
The \emph{binary separation problem} is the online problem with input $I=(n_1, y_1, \ldots, y_n)$ consisting of $n = n_1 + n_2$ positive values which are revealed one by one. There is a fixed partitioning of the set of items into a subset of $n_1$ large items and a subset of $n_2$ small items, so that all large items are greater than all small items. Upon receiving an item $y_i$, an online algorithm for the problem must guess if $y$ belongs to the set of small or large items. After the algorithm has made a guess, it is revealed whether the guess was correct. The goal is to maximize the number of correct guesses.
\end{definition}

Boyar et al.~\cite{boyar2016binpacking} establish a lower bound on the advice needed to achieve competitive ratio $c$ for the binary separation problem.

\begin{theorem}[Boyar et al.~\cite{boyar2016binpacking}]
\label{thm:bin-sep-lb}
Assume that an online algorithm solves the binary separation problem on sequences $I=(n_1, y_1, \ldots, y_n)$ where the $y_i$ are $n$ bit numbers and does so using at most $b(n)$ bits of advice while making at most $r(n)$ mistakes. Set $\alpha = (n-r(n))/n$. If $\alpha \in [1/2,1)$ then $b(n) \ge (1-H(\alpha))n$ 
where $H(p) = p \log (1/p) + (1-p) \log (1/(1-p))$.
\end{theorem}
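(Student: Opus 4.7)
The plan is to establish the lower bound via a classical counting/entropy argument against a family of hard instances indexed by $\{0,1\}^n$. At a high level: exhibit $2^n$ distinct instances that an algorithm with $b(n)$ bits of advice making at most $r(n)$ mistakes must all handle correctly, show that each fixed advice string can serve at most $V(n,r(n)) := \sum_{m=0}^{r(n)} \binom{n}{m}$ of them, and then invoke the standard entropy estimate on $V(n,r(n))$.

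First I would construct the hard family $\{I_x : x \in \{0,1\}^n\}$. Given $x$, the instance sets $n_1(x) = |x|$ and designates position $i$ as large iff $x_i = 1$. The values $y_i$ would be assigned using the $n$ bits of precision allowed so that (i) every ``large'' value strictly exceeds every ``small'' value within the instance, and (ii) the observable prefix $(y_1,\ldots,y_i)$ does not on its own disambiguate $x_i$ — both choices of $x_i$ remain consistent with some other instance in the family having the same observed prefix. Concretely I would cluster all values tightly around a hidden threshold, with the partition encoded in the low-order bits, so that the algorithm can only learn $x_j$ via the feedback bit after step $j$.

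Next I would fix an advice string $a \in \{0,1\}^{b(n)}$ and model the algorithm as a depth-$n$ binary decision tree. By the indistinguishability built into the family, once $a$ is fixed the guess at step $i$ depends only on $a$ and the feedback history $(f_1,\ldots,f_{i-1})$; since each $f_j$ reveals $x_j$, each root-to-leaf path corresponds to a unique $x \in \{0,1\}^n$, and the number of mistakes on $I_x$ equals the number of edges along the path where the fixed guess disagreed with $x_i$. Consequently the set of $x$ on which the algorithm makes at most $r(n)$ mistakes is in bijection with the ways to choose at most $r(n)$ ``wrong'' positions out of $n$, giving the advertised bound of $V(n,r(n))$ instances per advice string.

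Finally, since every instance in the family must be served by some advice string, $2^{b(n)} \cdot V(n,r(n)) \ge 2^n$. Using the standard estimate $V(n,r) \le 2^{H(r/n)\,n}$ valid for $r/n \le 1/2$ (which holds here because $\alpha \ge 1/2$, hence $r(n)/n = 1-\alpha \le 1/2$), together with $H(p) = H(1-p)$, this rearranges to $b(n) \ge (1 - H(\alpha))\,n$. The step I expect to be the hardest is the construction of the hard family: the counting argument depends crucially on making $y_i$ genuinely uninformative about $x_i$ beyond what the feedback later reveals, and this must be executed formally so that an all-powerful algorithm cannot exploit the $n$-bit values themselves to bypass the advice budget. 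Everything after that is a clean tree-counting and entropy calculation.
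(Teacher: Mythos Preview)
The paper does not prove this theorem; it is quoted from Boyar et al.\ and used as a black box in the subsequent reduction. So there is no ``paper's own proof'' to compare against. Your approach is essentially the standard one behind the cited result: embed the binary string guessing problem with known history into binary separation, and apply the Hamming-ball covering bound $2^{b(n)}\cdot\sum_{m\le r(n)}\binom{n}{m}\ge 2^n$.

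Two wrinkles in your write-up are worth flagging. First, the instance $I$ includes $n_1$, which your tree argument silently drops: you assert that once the advice $a$ is fixed the guess at step $i$ depends only on $a$ and the feedback $(f_1,\dots,f_{i-1})$, but the algorithm also sees $n_1=|x|$ before any items arrive. A naive fix (one decision tree per value of $n_1$) costs an additive $\log(n{+}1)$ in the bound; to get the clean inequality you either absorb $n_1$ into the advice or argue separately that it is useless.

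Second, your sketch of the hard family (``partition encoded in the low-order bits'') cannot work if the $y_i$ are fixed independently of $x$: with fixed values there are only $n{+}1$ threshold-consistent partitions, not $2^n$. The construction that actually delivers the indistinguishability you need is adaptive: $y_i$ is the midpoint of the interval still consistent with $x_1,\dots,x_{i-1}$, so $y_i$ is a function of the feedback already seen and reveals nothing about $x_i$. With $n$ bits of precision this bisection runs for exactly $n$ rounds, which is why the bit-length hypothesis appears in the statement. Once this is in place, your counting and entropy steps go through as written.
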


Moreover, Boyar et al.~\cite{boyar2016binpacking} provide a reduction from the binary separation problem to the standard bin packing problem to show that achieving competitive ratio $< 9/8$ requires an online algorithm to receive $\Omega(n)$ bits of advice. A simple adaptation of this reduction allows us to derive a similar result for the dual bin packing problem. We present the details below for completeness.

\begin{theorem}
An online algorithm achieving a competitive ratio $1+\epsilon$ for the dual bin packing problem with unrestricted bit size of input weights requires $(1-O(\epsilon \log (1/\epsilon)))n = \Omega_\epsilon(n)$ bits of advice, provided $\epsilon < 1/19$.
\end{theorem}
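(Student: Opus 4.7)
The plan is to prove the lower bound via an online reduction from the binary separation problem to the online dual bin packing problem that preserves both the advice tape and a constant-factor relation between rejections and mistakes, and then invoke Theorem~\ref{thm:bin-sep-lb}. Suppose $A$ is an online algorithm for dual bin packing with strict competitive ratio $1+\epsilon$ and advice complexity $b(n)$. Using $A$ as a subroutine, I will construct an online algorithm $B$ for binary separation that consumes the same $b(n)$ advice bits and commits at most $r=O(\epsilon n)$ mistakes. The theorem then follows.

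First I would describe the construction. Given a binary separation instance $(n_1,y_1,\ldots,y_n)$ with $n_1=n_2=n/2$ and $n$-bit rational values, I build a dual bin packing instance on $n$ items with $m=n/2$ bins, presented in the same online order. The weights $w_i$ are derived from the $y_i$ and all lie in a thin band $(1/2-\sigma,1/2+\sigma)$ around $1/2$ for some $\sigma=\Theta(\epsilon)$; this is why unrestricted bit size $s$ is needed. The values are engineered so that (i) any two items corresponding to ``large'' $y_i$'s overflow a bin, (ii) any two items corresponding to ``small'' $y_i$'s fit with room to spare, and (iii) there is a canonical perfect matching of larges with smalls that packs all $n$ items; hence $\OPT=n$. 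Because the band around $1/2$ is adversarially positioned, no fixed threshold on $w_i$ reveals the large/small label.

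Next, $B$ simply runs $A$ on this constructed instance, reading the same advice tape. For each $y_i$, $B$ outputs a label dictated by $A$'s placement (for example, ``large'' if $A$ places the item into a currently empty bin or beside an item previously identified as small, and ``small'' otherwise). The central step is a charging argument showing that every mistake made by $B$ forces $A$ into an irrevocable placement that either overflows immediately (and is rejected) or consumes a slot required by an as-yet-unseen partner, producing a downstream rejection. Hence the number of items rejected by $A$ is at least a constant fraction of the mistakes made by $B$. Combined with the competitive guarantee, $A$ rejects at most $n-n/(1+\epsilon)\le \epsilon n$ items, so $r=O(\epsilon n)$.

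Finally, set $\alpha=1-r/n=1-O(\epsilon)$. The condition $\epsilon<1/19$ guarantees $\alpha\ge 1/2$ so that Theorem~\ref{thm:bin-sep-lb} applies and also absorbs the constants introduced by the charging scheme. The theorem gives $b(n)\ge (1-H(\alpha))n$, and the Taylor expansion
\[
H(1-x)=x\log_2(1/x)+(1-x)\log_2\bigl(1/(1-x)\bigr)=O\bigl(x\log(1/x)\bigr)\qquad(x\to 0)
\]
yields $b(n)\ge (1-O(\epsilon\log(1/\epsilon)))n=\Omega_\epsilon(n)$, as required. The main obstacle is the charging argument in step three: a mis-placement by $A$ usually does not cause an immediate rejection, so one must track ``slot imbalances'' created by $A$'s wrong decisions and show they propagate to a constant fraction of eventually rejected items. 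The choice of the band width $\sigma=\Theta(\epsilon)$ and the specific weight map are what make both the perfect-packing OPT feasible and the propagation of mis-placements unavoidable, while simultaneously preserving the binary-separation hardness by keeping each item's label hidden from $A$.
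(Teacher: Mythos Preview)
Your overall plan---reduce binary separation to dual bin packing in an advice-preserving way, bound the number of mistakes by a constant times the number of rejections, and invoke Theorem~\ref{thm:bin-sep-lb}---is the paper's approach. But the specific reduction you sketch cannot be carried out, and the gap is in the construction, not in the charging argument you flag as the ``main obstacle.''

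You require weights $w_i$ (computed online by $B$ from $y_i$ alone) such that (i) any two large-label items overflow, (ii) any two small-label items fit, and (iii) there is a perfect large--small matching. Conditions (i) and (ii) force every large $w_i>1/2$ and every small $w_i<1/2$, i.e.\ the threshold in $w$-space is exactly $1/2$. But $B$ must compute $w_i$ as a fixed function of $y_i$ without knowing the hidden $y$-threshold; any order-preserving map transports the unknown $y$-threshold to an \emph{unknown} $w$-threshold, so you cannot pin it at $1/2$. Conversely, if the map pins the $w$-threshold at $1/2$, then $B$ already knows the $y$-threshold and could solve binary separation without $A$, contradicting the hardness you are importing. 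Your remark that ``the band around $1/2$ is adversarially positioned'' is precisely the obstruction: if the adversary positions it, $B$ cannot compute it online.

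The paper's reduction avoids this by \emph{not} trying to encode the label into whether $w_i$ exceeds $1/2$. It uses $2n$ items in three phases with $n$ bins: first $n_1$ ``anchor'' items of weight $1/2+\delta_{\min}$ (this uses only $n_1$, which is given); then for every $y_i$ an item of weight $1/2-f(y_i)<1/2$ regardless of its label; finally, \emph{after} all feedback is known, complementary items $1/2+f(y_i)$ for the truly small $y_i$. The guess is read off from whether $A$ pairs the phase-2 item with a phase-1 anchor. The concrete accounting then yields at least $n-10\frac{c-1}{c}n$ correct guesses, whence $(10-9c)/c\in(1/2,1)$ exactly when $c\in(1,20/19)$---this is where the bound $\epsilon<1/19$ comes from. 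Your write-up also leaves the labeling rule and the mistake-to-rejection charging entirely unspecified; but even before that, the instance you propose cannot exist.
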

\begin{proof}
We show how to reduce the binary separation problem to the dual bin packing problem while preserving the size of advice and the competitive ratio.

Let ALG be an algorithm for the dual bin packing problem that achieves competitive ratio $c$ and uses advice $b(n)$. Let $I=(n_1, (y_1, \ldots, y_n))$ be an input to the binary separation problem. We define $ALG'$ for solving $I$ as follows. $ALG'$ constructs an instance of the dual bin packing problem in the online fashion. It will use decisions and the advice string of ALG to make decisions about its own inputs $y_i$. Let $\delta_{\max}>\delta_{\min} > 0$ be small enough numbers. Suppose that we have a strictly decreasing function $f: \mathbb{R}\rightarrow(\delta_{\min},\delta_{\max})$. $ALG'$ invokes ALG with $n$ bins and $2n$ items. $ALG'$ constructs input weights to ALG in three phases.
\begin{description}
\item[Phase 1 (preprocessing):] the first $n_1$ weights are defined as $1/2+\delta_{\min}$. This is generated by $ALG'$ prior to any inputs seen from $I$.
\item[Phase 2 (online):] when $y_i$ arrives, $ALG'$ defines a new input item to ALG of weight $1/2-f(y_i)$. In this phase $ALG'$ uses decisions of ALG to handle its own inputs. If ALG packs the current item into a bin that contains $1/2+\delta_{\min}$ item from phase 1 then $ALG'$ declares $y_i$ to be large. Otherwise, $ALG'$ declares the item to be small. We shall refer to $1/2-f(y_i)$ weight items corresponding to truly small (large) $y_i$ as small items (large items). 
\item[Phase 3 (post processing):] once $ALG'$ has processed the entire sequence $I$, it appends weights $1/2+f(y_i)$ for all truly small $y_i$ from $I$. We refer to these weights as the complementary weights of small items.
\end{description}
First observe that \textsc{OPT} for the constructed instance of the dual bin packing packs all $2n$ items into $n$ bins: the $n_1$ weights corresponding to the large items can be paired up with the $n_1$ items from phase 1, and the $n_2$ weights corresponding to the small items can be paired up with their complementary weights from phase 3 in the remaining $n_2 = n - n_1$ bins.

Clearly, the advice complexity and the precision of the input items are preserved by this reduction. Thus, to finish the argument we need to analyze how many mistakes $ALG'$ does. We bound the number of mistakes in terms of the number of items unpacked by ALG. We define the following variables.
\begin{itemize}
\item Let $p_1$ be the number of items from phase 1 that were not packed by ALG.
\item Let $\ell_2$ be the number of large items from phase 2 that were not packed by ALG.
\item Let $s_2$ be the number of small items from phase 2 that were not packed by ALG.
\item Let $p_3$ be the number of items from phase 3 that were not packed by ALG.
\end{itemize}

The overall number of items that were not packed by ALG is $p_1+\ell_2+s_2+p_3 \le \frac{c-1}{c} 2n$. 
Observe that the complementary weights can only be paired up with the corresponding small item weights, and phase 1 items can only be paired up with large or small phase 2 items.

The number of bins containing phase 1 items is $n_1 - p_1$. The number of bins containing phase 3 items is $n_2-p_3$. Due to the above observations, all these bins have to be distinct. Thus,  the number of bins that do not contain either phase 1 or phase 3 items is $p_1+p_3$. Call these the leftover bins. The are two types of mistakes that $ALG'$ can do: (1) it classifies a large item as being small, and (2) it classifies a small item as being large. Since large items can only be paired either with phase 1 items or be placed in the leftover bins, type (1) mistakes occur only when large items are placed in the leftover bins or when large items remain unpacked. There can be at most $2(p_1+p_3)$ large items in the leftover bins. Thus, $ALG'$ makes at least $g_1:=n_1 - p_1 - 2(p_1+p_3)-\ell_2$ correct guesses for large items. A type (2) mistake happens only when a small item is paired up with a phase 1 item. Since there can be at most $n_1-p_1-g_1 = 2(p_1+p_3)+\ell_2$ phase 1 items not paired up with large items, there can be at most that many type (2) mistakes. Thus, $ALG'$ makes at least $g_2 = n_2 - s_2 - 2(p_1+p_3)-\ell_2$ correct guesses for small items. Overall, $ALG'$ makes $g_1+g_2 = n_1 + n_2 - s_2-p_1-4(p_1+p_3)-2\ell_2 \ge n_1+n_2-5(p_1+\ell_2+s_2+p_3) \ge n - 10\frac{c-1}{c}n$ good guesses. The fraction of good guesses is then $\frac{n-10(c-1)n/c}{n} = \frac{10-9c}{c}$. By Theorem~\ref{thm:bin-sep-lb}, it follows that $b(n) \ge (1-H((10-9c)/c)) n$. Observe that $(10-9c)/c \in (1/2, 1)$ provided that $c \in (1, 20/19)$. In particular, if $\epsilon$ is a small positive constant, then achieving a competitive ratio $c = 1+\epsilon$ for the dual bin packing problem requires $(1-H((1-9\epsilon)/(1+\epsilon)))n = (1-H(O(\epsilon)) ) n = (1-O(\epsilon \log (1/\epsilon)))n=\Omega_\epsilon(n)$ bits of advice.
%
\end{proof}

All in all, the dual bin packing problem admits short advice in case of $s$ bounded by a slowly growing function of $n$, but requires long advice when $s$ is unrestricted. This is akin to the distinction between the polynomial time vs pseudo-polynomial time in the regular Turing machine world. One of the conceptual contributions of this paper is a demonstration that ``pseudo-short'' advice and ``truly short'' advice are provably different. To make this idea precise, we introduce two natural classes of efficient advice problems.

\begin{definition}
The class \emph{EAC (efficient advice complexity)} consists of online problems $P$ such that an input to $P$ is given by $n$ items, and the advice complexity of achieving $1+\epsilon$ competitive ratio for $P$ is $O_\epsilon(\poly(\log n))$.

Denoting the maximum bit size of an input item to $P$ by $s$, we define a superclass \emph{WEAC (weakly efficient advice complexity)} of EAC to consist of those online problems $P$ such that the advice complexity of achieving $1+\epsilon$ competitive ratio for $P$ is $O_\epsilon(\poly(\log n, s))$.
\end{definition}

EAC class is defined by analogy with communication complexity where $O(\poly(\log n))$ communication is considered efficient (see Babai et al.~\cite{babai1986}). WEAC class is also natural. The advice length bound of algorithms for WEAC problems suggests that the advice can consist of a short description of combinatorial parameters of a problem (e.g., length of a stream, index into a stream, which take $O(\log n)$ bits to describe) plus a small (polylogarithmic) number of actual data items from the stream.

In light of the above definitions and the main result of this section and Section~\ref{sec:advice}, the dual bin packing problem witnesses the following class separation theorem.

\begin{theorem}
WEAC$\neq$EAC.
\end{theorem}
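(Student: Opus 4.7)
The plan is to use the dual bin packing problem as the separating witness: show dual bin packing lies in WEAC but not in EAC. Containment in WEAC is immediate from Section~\ref{sec:advice}, and non-containment in EAC follows from the lower bound of Section~\ref{sec:advice-general}.

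First I would verify WEAC membership. By the theorem of Section~\ref{sec:advice}, an online algorithm achieves a $1+\Theta(\epsilon)$ competitive ratio for dual bin packing using $O((s+\log n)/\epsilon^2)$ bits of advice, where $s$ is the maximum bit-size of any input weight. Rescaling $\epsilon$ by a constant yields the same asymptotic bound for a $1+\epsilon$ competitive ratio. Since $(s+\log n)/\epsilon^2 = O_\epsilon(\poly(\log n, s))$, the dual bin packing problem satisfies the WEAC definition.

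Next I would rule out EAC membership. Suppose for contradiction that dual bin packing were in EAC. Then for every $\epsilon > 0$ there is a function $b(n) = O_\epsilon(\poly(\log n))$ such that some online algorithm uses at most $b(n)$ bits of advice and achieves a $1+\epsilon$ competitive ratio on every $n$-item instance, with no restriction on the bit-size of items. Fix $\epsilon$ small enough that $\epsilon < 1/19$. The theorem of Section~\ref{sec:advice-general} exhibits, for each $n$, instances on which any such algorithm must use $\Omega_\epsilon(n)$ bits of advice (the reduction from the binary separation problem produces weights whose precision is allowed to grow with $n$). Since $\poly(\log n) = o(n)$, this contradicts the hypothetical EAC bound for all sufficiently large $n$. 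Hence dual bin packing is not in EAC.

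Combining the two directions, dual bin packing belongs to WEAC $\setminus$ EAC, so the inclusion EAC $\subseteq$ WEAC given by the definitions is strict, and we conclude WEAC $\neq$ EAC. There is no real obstacle here beyond carefully matching definitions: the main point to be careful about is that the EAC bound must hold uniformly over all instances (including those with high-precision weights) so that the lower bound from the unrestricted-$s$ setting actually refutes EAC membership, and that the constant $\epsilon$ in the lower bound is picked in the regime $\epsilon < 1/19$ where the reduction of Section~\ref{sec:advice-general} applies.
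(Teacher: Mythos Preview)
Your proposal is correct and follows exactly the approach the paper takes: the paper simply observes that the upper bound of Section~\ref{sec:advice} places dual bin packing in WEAC, while the $\Omega_\epsilon(n)$ lower bound of Section~\ref{sec:advice-general} (for unrestricted $s$ and small enough $\epsilon$) excludes it from EAC. Your write-up is in fact more detailed than the paper's, which leaves the proof implicit.
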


\section{Conclusion}
\label{sec:conclusion}
We presented a simple PTAS for the dual bin packing problem. Although a PTAS for a more general multiple knapsack problem was already known, our PTAS is arguably simpler to state and analyze. Its simplicity helped us to adapt it to the tape-advice model of online algorithms. We showed that a $1+\epsilon$ competitive ratio for the dual bin packing problem is achievable with $O\left( \frac{s+\log n}{\epsilon^2} \right)$ bits of tape advice. We showed that the dependence on $s$ is necessary to obtain such small advice, as the dual bin packing problem  requires $\Omega_\epsilon(n)$ when $\epsilon >0$ is small enough, $s$ is unrestricted, and $m$ is part of the input. 
We introduced two natural advice complexity classes EAC and WEAC. The conceptual distinction between the classes WEAC and EAC is similar to the Turing machine world distinction between pseudo-polynomial time and strongly polynomial time. EAC captures problems that can be approximated to within $1+\epsilon$ with $O_\epsilon(\poly \log n)$ bits of advice, whereas WEAC captures problems that can be approximated to within $1+\epsilon$ with $O_\epsilon(\poly(\log n, s))$ bits of advice. Our results on the dual bin packing problem imply that WEAC$\neq$EAC. 

One immediate question left open by our work is whether there is an small advice  algorithm for small $s$ which requires less advice bits. More specifically, does there exist a $1+ \epsilon$ approximation using $o_{\epsilon}(s) + O_\epsilon(\log n)$ advice bits for $s = o(n)$? In this paper we exclusively studied the dual bin packing in the regime of obtaining $1+\epsilon$ competitive ratio when $\epsilon$ is small and $m$ is part of the input. Are there sublinear advice algorithms for large $\epsilon$, e.g., $\epsilon = 1/2$? Also, does the dual bin packing admit sublinear advice algorithms when $m$ is a small constant? It is also interesting to see whether or not results for the dual bin packing problem can be extended to more general problems such as  when bins have different capacities, and more generally to the  multiple knapsack problem, while preserving conceptual simplicity. Last and perhaps a most important question is whether or not there 
exist online algorithms with efficiently computable (i.e., linear or even online computable as in 
\cite{Durr2016,borodinps17}) advice for the dual bin packing problem achieving a constant competitive ratio.

\bibliography{ptas}{}
\bibliographystyle{plain}

\end{document}